\newtheorem{theorem}{Theorem}
\begin{document}
%
% paper title
% can use linebreaks \\ within to get better formatting as desired
% Do not put math or special symbols in the title.
\title{Fast Greedy Approaches for Compressive Sensing of Large-Scale Signals}
%
%
% author names and IEEE memberships
% note positions of commas and nonbreaking spaces ( ~ ) LaTeX will not break
% a structure at a ~ so this keeps an author's name from being broken across
% two lines.
% use \thanks{} to gain access to the first footnote area
% a separate \thanks must be used for each paragraph as LaTeX2e's \thanks
% was not built to handle multiple paragraphs
%

\author{Sung-Hsien Hsieh$^{*,**}$,
        Chun-Shien Lu$^{**}$, and
        Soo-Chang Pei$^{*}$\\
        $^{*}$Graduate Inst. Comm. Eng., National Taiwan University, Taipei, Taiwan\\
        $^{**}$Institute of Information Science, Academia Sinica, Taipei, Taiwan
\thanks{Corresponding author: Chun-Shien Lu; e-mail: lcs@iis.sinica.edu.tw}
}

\maketitle

% As a general rule, do not put math, special symbols or citations
% in the abstract or keywords.
\begin{abstract}
Cost-efficient compressive sensing is challenging when facing large-scale data, {\em i.e.}, data with large sizes.
Conventional compressive sensing methods for large-scale data will suffer from low computational efficiency and massive memory storage.
In this paper, we revisit well-known solvers called greedy algorithms, including Orthogonal Matching Pursuit (OMP), Subspace Pursuit (SP), Orthogonal Matching Pursuit with Replacement (OMPR).
Generally, these approaches are conducted by iteratively executing two main steps:
1) support detection and 2) solving least square problem.

To reduce the cost of Step 1, it is not hard to employ the sensing matrix that can be implemented by operator-based strategy instead of matrix-based one and can be speeded by fast Fourier Transform (FFT).
Step 2, however, requires maintaining and calculating a pseudo-inverse of a sub-matrix, which is random and not structural, and, thus, operator-based matrix does not work.
To overcome this difficulty, instead of solving Step 2 by a closed-form solution, we propose a fast and cost-effective least square solver, which combines a Conjugate Gradient (CG) method with our proposed weighted least square problem to iteratively approximate the ground truth yielded by a greedy algorithm.
%We prove that this solver will converge to ground truth and benefit from ???SRM.
Extensive simulations and theoretical analysis validate that the proposed method is cost-efficient and is readily incorporated with the existing greedy algorithms to remarkably improve the performance for large-scale problems.

%1) Find a support based on maximum correlation between measurements, $y$, and a sampling matrix $A$.
%2) Minimize least square error by found support.
\end{abstract}

% Note that keywords are not normally used for peerreview papers.
\begin{IEEEkeywords}
Compressed/Compressive sensing, Greedy algorithm, Large-scale Data, Least square, Sparsity
\end{IEEEkeywords}

% For peer review papers, you can put extra information on the cover
% page as needed:
% \ifCLASSOPTIONpeerreview
% \begin{center} \bfseries EDICS Category: 3-BBND \end{center}
% \fi
%
% For peerreview papers, this IEEEtran command inserts a page break and
% creates the second title. It will be ignored for other modes.
\IEEEpeerreviewmaketitle

\section{Introduction}\label{Sec: intro}
In this section, we first briefly introduce the background of compressive sensing (CS) in Sec. \ref{Sec: BG}.
Then, the existing CS recovery algorithms (for large-scale signals) are discussed in Sec. \ref{Sec: RW}.
Finally, the overview and contributions of our proposed method are described in Sec. \ref{Sec: Overview}, followed by the organization of the remainder of this paper.

\subsection{Background}\label{Sec: BG}
Compressive sensing (CS) \cite{Donoho2006}\cite{Baraniuk2007}\cite{Candes2008} for sparse signals in achieving simultaneous data acquisition and compression has been extensively studied in the literature.
CS is recognized to be composed of fast encoder and slow decoder.

Let $x \in \mathbb{R}^{N}$ denote a $K$-sparse $1$-D signal to be sensed, let $\Phi \in \mathbb{R}^{M \times N}$  ($M < N$) represent a sampling matrix, and let $y \in \mathbb{R}^{M}$ be the measurement vector.
At the encoder, a signal $x$ is simultaneously sensed and compressed via $\Phi$ to obtain a so-called measurement vector $y$ as:
\begin{equation}
y=\Phi x,
\label{Eq: Random Projection}
\end{equation}
which is usually called a procedure of ranfom projection.
The measurement rate, defined as $0<\frac{M}{N}<1$, indicates the compression ratio (without quantization) and is a major concern in many applications.
\cite{Donoho2006}\cite{Candes2008} show that it is a good choice to design $\Phi$ as a Gaussian random matrix to satisfy either mutual incoherence property (MIP) or restricted isometry property (RIP).
Moreover, sparsity is an inherent assumption made in compressed sensing to solve the underdetermined system in Eq. (\ref{Eq: Random Projection}) due to $M < N$.
Nevertheless, for real applications, natural signals are often not sparse in either the time or space domain but can be sparsely represented in a transform (\textit{e.g.}, discrete cosine transform (DCT) or wavelet) domain.
Namely, $x=\Psi s$, where $\Psi$ is a transform basis (or dictionary) and $s$ is a sparse representation with respect to $\Psi$.
So, Eq. (\ref{Eq: Random Projection}) is also rewritten as:
\begin{equation}
y=\Phi\Psi s=As,
\label{Eq: Random Projection-2}
\end{equation}
where $A=\Phi\Psi \in \mathbb{R}^{M\times N}$.
We say that $x$ is $K$-sparse if $s$ contains only $K$ non-zero entries (exactly $K$-sparse) or $K$ significant components (approximately $K$-sparse).

At the decoder,  the original signal $x$ can be perfectly recovered by an intuitive solution to CS recovery, called $\ell_{0}$-minimization, which is defined as:
\begin{equation}
	\min_{s} \| s \|_{0} \quad s.t. \quad \| y-As \|_{2} \leq \epsilon,
\label{Eq: L0}
\end{equation}
where $\epsilon$ is a tolerable error term.
Due to $M < N$, this system is underdetermined and there exists infinite solutions.
Thus, solving $\ell_{0}$-minimization problem requires combinatorial search and is NP-hard.

Alternative solutions to Eq. (\ref{Eq: L0}) usually are  based on two strategies: convex programming and greedy algorithms.
For convex programming, researchers \cite{Donoho2006}\cite{Candes2008} have shown that when $M \geq O(K \log \frac{N}{K})$ holds, solving $\ell_{0}$-minimization is equivalent to solving $\ell_{1}$-minimization, defined as:
\begin{equation}
	\min_{s} \| s \|_{1} \quad s.t. \quad \| y-As \|_{2} \leq \epsilon.
\label{Eq: L1}
\end{equation}
Typical $\ell_{1}$-minimization models include Basis Pursuit (BP) \cite{Chen2008} and Basis Pursuit De-Noising (BPDN) \cite{Needella2003} with the computational complexity of recovery being polynomial.
For greedy approaches, including Orthogonal Matching Pursuit (OMP) \cite{Tropp2007}, CoSaMP \cite{Needella2009}, and Subspace pursuit (SP) \cite{Dai2009}, utilize a greedy strategy for support detection first and then solve a least square problem to recover the original signal.
The main difference among these greedy algorithms is how the support detection step is condcuted.

Nevertheless, for large-scale signals (\textit{e.g.}, $N>2^{20}$), both $\ell_{1}$-minimization and greedy algorithms suffer from high computational complexity and massive memory usage.
Ideally, the memory costs at the encoder and decoder are expected to approximate $O(M)$ and $O(N)$, respectively, which are the minimum costs to store the original measurement vector $y$ and signal $x$.
If $A$ is required to be stored completely, however, it will cost $O(MN)$ bytes ({\em e.g.}, when $N=2^{20}$ and $M=2^{18}$, the sensing matrix needs several terabytes).
It often overwhelms the capability of existing hardware devices.

In view of the incoming big data era, such a troublesome problem needs to pay immediate attention.
In this paper, we say that $X \in \mathbb{R}^{N_1\times N_2\times\cdots\times N_D}$ is a kind of big data if ${N_1\times N_2\times\cdots\times N_D}$ is lare enough or more specifically its size approaches the storage limit of hardwares like PC, notebook, and so on.
%In addition, we focus on $D=2$; {\em i.e.}, two-dimensional images with large sizes of $N_1\times N_2$.

\subsection{Related Work}\label{Sec: RW}
The existing methods that can deal with compressive sensing of large-scale signals are discussed in this section. As mentioned above, we focus on computational efficiency and memory usage.
Basically, our survey is conducted from the aspects of encoder and decoder in compressive sensing.
We mainly discuss block-based, tensor-based, and operator-based compressive sensing algorithms here.

\begin{table*}[!htbp]
\caption{Comparison between each algorithm.}
\centering
\begin{tabular}{|c|c|c|c|c|}
\hline
algorithms & sensing strategy & assumptions & algorithm type & storage\\
\hline
N-BOMP \cite{Caiafa2013} & tensor-based (2D) & block sparsity & greedy & $O(\sqrt{MN})$\\
\hline
\cite{Caiafa2015} & tensor-based (2D) & low multilinear-rank & closed-form & $O(\sqrt{MN})$\\
\hline
BCS \cite{Gan2007} & block-based (1D) & - & Landweber-based & $O(\frac{MN}{B})$ \\
\hline
BCS-SPL \cite{Mun2009} & block-based (1D)& - & Landweber-based & $O(\frac{MN}{B})$ \\
\hline
\cite{Milzarek2014} & conventional (1D)& - & Armijo-based & $O(MN)$\\
\hline
GPSR \cite{Figueiredo2007}, SpaRSA \cite{Wright2008} & conventional (1D) & - & IST-based & $O(MN)$ \\
\hline
\cite{Beck2009}\cite{Wen2010} & conventional (1D) & - & FPC\_AS-based & $O(MN)$ \\
\hline
\end{tabular}
\label{table: comparison_algorithm}
\end{table*}

\subsubsection{Strategies at Encoder}\label{Sec: St encoder}
From Eq. (\ref{Eq: Random Projection}), we can see that both the storage (for $\Psi$) and computation (for $\Psi x$) costs require $O(MN)$ bytes and $O(MN)$ operations, respectively.
When the signal length becomes large enough, storing $\Phi$ and computing $\Phi x$  become an obstacle.

In the literature, Gan \cite{Gan2007} and Mun and Fowler \cite{Mun2009} propose block-based compressive sensing techniques, wherein a large-scale signal is separated into several small block signals, which are individually sensed via the same but smaller sensing matrix.
The structure of block sensing reduces both storage and computation costs to $O(\frac{MN}{B})$, where $B$ is the number of blocks.
Although block-based compressive sensing can deal with small blocks quickly and easily, it actually cannot work for
the scenario of medical imaging in that an image  generated from the fast Fourier Transform (FFT) coefficients of an entire sectional view \cite{Mun13} violates the structure of block-based sensing.
%Nevertheless, block-based compressive sensing does not always work.
%For example, for medical imaging, we are impossible to cut apart the sinusoidal wave to sense a part of the image.

Shi {\em et al.} \cite{shi2013} and Caiafa and Cichocki \cite{Caiafa2013} consider the problem of large-scale compressive sensing based on tensors.
In other words, the signal is directly sensed and reconstructed in the original (high) dimensional space instead of reshaping to $1$-D.
For example, a $2$-D image $X \in \mathbb{R}^{\sqrt{N}\times \sqrt{N}}$ is sensed via
\begin{equation}
Y=\Phi_{1} X \Phi_{2}^{T},
\label{Eq: 2D Random Projection}
\end{equation}
where $\Phi_{1}$ and $\Phi_{2} \in \mathbb{R}^{\sqrt{M}\times \sqrt{N}}$, and $Y \in \mathbb{R}^{\sqrt{M}\times \sqrt{M}}$.
This strategy is often called separable sensing \cite{Rivenson09a, Rivenson09b}.
In this case, both the storage and computation costs are reduced to $O(\sqrt{MN})$.
\cite{Caiafa2015} further presents a close-from solution for reconstruction from compressive sensing based on assuming the low-rank structure.
%This fastest method encounters the problem of redundant sensing in that, under the same measurement rate, the reconstructed quality is (remarkably) lower than those obtained from other CS algorithms.
%, the closed-form solution is derived at decoder.
It should be noted that since tensor-based approaches, in fact, change the classical sensing structure ({\em i.e.}, $y=\Phi x$) of CS, the decoder no longer follows the conventional solvers like Eq. (\ref{Eq: L1}).
%For instance, though each block can be recovered independently by conventional solvers, it may result in non-smooth boundaries between blocks, called blocky effect.
%Thus, it requires an extra process to repair the reconstruction quality.
Specifically, the measurements in tensor-based approaches form a tensor but conventional solvers only accept one-dimensional measurement vector.

In addition to block-based and tensor-based approaches, operator-based approaches are to design $\Phi$ as a deterministic matrix or structurally random matrix, implemented by certain fast operators.
For example, Candes {\em et al.} \cite{Candes2006-F} propose the use of a randomly-partial Fourier matrix as $\Phi$.
In this case, we can implement $\Phi x$ by $\mathcal{D}\left( FFT(x)\right)$, where $FFT(\cdot)$ is the function of fast Fourier transform (FFT) and $\mathcal{D}\left(\cdot \right)$ denotes a downsampling operator that outputs an $M \times 1$ vector.
Thus, $\Phi$ is not necessarily stored in advance.
In addition, the computation cost also becomes $O( N \log N)$, which especially outperforms $O(MN)$ for large-scale signals because $M$ is positively proportional to $N$.
Do {\em et al.} \cite{Do2012} further propose a kind of random Gaussian-like matrices, called Structurally Random Matrix (SRM), which benefits from operator-based strategy and achieves reconstruction performance as good as random Gaussian matrix.
In sum, since operator-based approaches follow the original CS structure, the decoder is not necessary to be modified.

\subsubsection{Strategies at Decoder}\label{Sec: St Decoder}
For block-based approaches \cite{Gan2007}\cite{Mun2009}, each block can be individually recovered with low computation cost and memory usage but incurs blocky effects between boundaries of blocks.
In \cite{Mun2009}, Mun and Fowler propose a method, called BCS-SPL, which further removes blocky effects by Wiener filtering.
In addition to the incapabliity of sensing medical images like MRI, BCS-SPL is also not adaptive in that the measurement rates are fixed for different blocks by ignoring the potential differences in smoothing blocks that need less measurement rates and complex blocks that require more measurement rates.

For tensor-based compressive sensing, \cite{Caiafa2013} develops a new solver called N-way block OMP (N-BOMP).
Though N-BOMP is indeed faster than conventional CS solvers, its performance closely depends on the unique sparsity pattern, {\em i.e.}, block sparsity, of an image.
Specifically, block sparsity states that the importnat components of an images are clustered together in blocks.
This characteristic seems to only naturally appear in hyperspectral imaging.
%Similarly, Y. Fang \cite{Fang2012} modified OMP into 2D-OMP without any assumption but 2D-OMP has %higher computation cost than N-BOMP.
In \cite{Sidiropoulos2012}, a multiway compressive sensing (MWCS) method for sparse and low-rank tensors is proposed.
MWCS achieves more efficient reconstruction, but its performance relies heavily on tensor rank estimation, which is NP-hard.
A generalized tensor compressive sensing (GTCS) method \cite{Li2013}, which combines $\ell_{1}$-minimization with high-order tensors, is beneficial for parallel computation.

For operator-based compressive sensing algorithms, since the conventional solvers, mentioned in the previous subsection,  still can be used, here we mainly review state-of-the-art convex optimization algorithms focusing on the large-scale problem, where only simple operations such as $A$ and $A^{T}$ conducted by operator are required.

Cevher {\em et al.} \cite{Cevher2014} point out that an optimization algorithm based on the first-order method such as gradient descent features nearly dimension-independent convergence rate and is theoretically robust to the approximations of their oracles.
Moreover, the first-order method such as NESTA \cite{Becker2011} often involves the transpose of sensing matrix, which is easily implemented by operator.
Both GPSR \cite{Figueiredo2007} and SpaRSA \cite{Wright2008} are closely related to iterative shrinkage/threshold (IST) methods and support the operator-based strategy.
In addition, \cite{Beck2009}\cite{Wen2010}\cite{Wen2012} have shown that algorithms based on solving fixed-point equation have fast convergence rate, which can be combined with operator-based strategy too.
For example, Milzarek{\em et al.} \cite{Milzarek2014} further propose a globalized semismooth Newton method, where partial DCT matrix is adopted as the sensing matrix for fast sensing.
But, it requires that signals are sparse in the time/spatial domain leading to limited applications.

%\subsubsection{Brief Summary about Existing works}\label{Sec: brief summary}
%At encoder, it is easier to sense large-scale signals with reasonable memory storage and computation cost. But, at decoder, large-sacle

\subsubsection{Brief Summary of Related Works}\label{Sec: Brief Summary}
Table \ref{table: comparison_algorithm} depicts the comparisons among the aforementioned algorithms, where storage  is estimated based on non-operator version.
If operator can be used, the storage of storing a sensing matrix is not required and, thus, is bounded by $O(N)$ for each row vector, which is only related to the minimum requirement for storing the reconstructed signal.
Since the characteristic of compressive sensing states that CS encoder spends lower memory and computation cost than CS decoder, when taking hardware implementation in real world into consideration, tensor-based methods are more complicated than others.
For example, the single-pixel camera designed in \cite{Marco2008} uses a DMD array as a row of $A$ to sense $x$.
By changing the pattern of DMD array $M$ times, the measurements are collected.
This structure, however, cannot support separable sensing that is commonly used in tensor-based methods.
%On the other hand,  decoder, block-based sensing does not intrinsically overcome large-scale problem.
%Though blocky effects are handled well in \cite{Mun2009} for images, it is unpredictable when signals are %arbitrary.
Block-based CS methods do not intrinsically overcome large-scale problems and lack convincing theoretical proof about complexity, performance, and convergence analysis.
Operator-based CS methods maintain the original structures of CS encoder and decoder.
Thus, most of the existing fast algorithms for $\ell_{1}$-minimization  can be used only if all of matrix operations can be executed in an operator manner.
Furthermore, they have strong theoretical validation since $\ell_{1}$-minimization is a well-known model and has been developed for years.
In fact, the operator-based strategy can also be employed in tensor-based and block-based compressive sensing methods to partially reduce their computation cost and storage usage.

\subsection{Contributions and Overview of Our Method}\label{Sec: Overview}

Up to now, it is still unclear how greedy algorithms can deal with large-scale problems by utilizing operator-based strategy.
Although SparseLab releases the OMP code combined with operator, the program still cannot deal with large-scale signals.
This challenge is the objective of this paper and, to our knowledge, we are the first to explore this issue.
In fact, our idea can help all greedy algorithms to deal with large-scale signals.
We will discuss the problem in detail in Sec. \ref{sec:Problem Formulation}.

%In Sec. \ref{Sec: RW}, we review state-of-the-art algorithms for large-scale signals. However, those algorithms never involve greedy methods at decoder such as Orthogonal Matching Pursuit (OMP), Subspace Pursuit (SP) and etc. In this paper, we proposed a nuclear idea for improving  greedy methods for large-scale signals.

Generally, greedy algorithms are conducted by iteratively executing two main stages: (a) support detection and (b) solving least square problem with the known support.
To reduce the cost of support detection, we follow the common strategy of adopting operator-based, instead of matrix-based, design of a sensing matrix ({\em e.g.}, \cite{Do2012}).
%we propose to use a Structurally Random Matrix (SRM) \cite{Do2012}, which is implemented by operator-based strategy instead of matrix-based one, and can be speeded by FFT. By doing so, the memory cost is reduced from $O(MN)$ to $O(N)$ and the computation cost is reduced from $O(MN)$ to $O(N \log N)$.
Therefore, we no longer discuss this step in this paper as it is not the focus of our method.

For solving the least square problem, we propose a fast and cost-effective solver by combining a Conjugate Gradient (CG) method with a weighted least square model to iteratively approximate the ground truth.
In our method, the memory cost of solving least square problem is reduced to $O(N)$, and
the computation cost of CG method is approximately $O(N \log N)$ for finite floating point precision and $O(KN \log N)$ for exact precision.
%We also show that our method will converge to ground truth and benefit from operator-based strategy.

In should be noted that although using CG to solve the least square problem is not new, our extended use of CG brings additional advantages.
For example, Blumensath {\em et al.} \cite{Blumensath2008}  proposed ``Gradient Pursuits (GP)'' in which the memory cost is dominated by $O(MN)$ to save $\Phi$ (see Table 1 in \cite{Blumensath2008}), which cannot be stored explicitly for large-scale problem.
Our method extends GP to reduce the memory cost by using SRM to avoid saving $\Phi$.
In addition, we reformulate a least square problem used in GP into a weighted one and show both models are equivalent.
More specifically, solving weighted least square problem only requires $\Phi$ and can benefit from fast computation of operator-based approaches (as in SRM).
Traditional least square problem, however, involves sub-matrices of $\Phi$ and cannot directly be conducted by fast operator.
%In sum, compared with GP, the proposed method further saves the storages of $\Phi$, which does not be stored explicitly for large-scale problem.

On the other hand, ``iteratively reweighted least-square (IRLS)'' was proposed in \cite{Chartrand2008}.
Though both IRLS and our method involve weighting, they are totally different.
First, IRLS uses weighting to approximate $\ell_1$-norm solution instead of $\ell_2$-norm solution in original least square problem while our greedy method still solves $\ell_2$-norm solution in the weighted least square problem.
Second, IRLS is not a greedy method.

Moreover, we conduct extensive simulations to demonstrate that our method can greatly improve OMP \cite{Tropp2007}, Subspace pursuit (SP) \cite{Dai2009}, and OMPR \cite{Jain2011} in terms of memory usage and computation cost for large-scale problems.

%Besides, the computation complexity along with experimental results show that the proposed algorithm is faster to solve sparse signal than OMP.

\subsection{Outline of This Paper}\label{Sec: Outline}
The rest of this paper is organized as follows.
In Sec. \ref{sec:Problem Formulation}, we describe the bottleneck of current greedy algorithms that is to solve the least square problem.
% comments are made first on the insufficiency of greedy approaches. We take OMP as an example and point out the dilemma.
The proposed idea of fast and cost-effective least square solver for speeding greedy approaches along with theoretical analysis is discussed in Sec. \ref{sec:Proposed Method}.
In Sec. \ref{Sec: Experimental Results}, extensive simulations are conducted to show that our method indeed can be readily incorporated with state-of-the-art greedy algorithms, including OMP, SP, and OMPR, to improve their performance in terms of the memory and computation costs.
Finally, conclusions are drawn in Sec. \ref{sec:conl}.

\section{Problem Statement}\label{sec:Problem Formulation}

In this paper, without loss of generality, we focus on a signal $\in \mathbb{R}^{N_1\times N_2\times\cdots\times N_D}$, where $N=N_1$ and $N=N_1\times N_2$ are large enough with respect to $D=1$ and $D=2$.
When $D=2$, the signal usually is reshaped to $1$-D form in the context of compressive sensing.
Ideally, the memory cost of a compressive sensing algorithm should be $O(N)$, which is the minimum requirement for saving the original signal, $x$.
The computational cost, however, depends on an algorithm itself.
Since greedy algorithms share the same framework composed of support detection and solving least square problem, we shall focus on reducing the costs of these two procedures.

In this section, we discuss the core of proposed fast and cost-effective greedy approach and take OMP as an example for subsequent explanations.
We shall point out the dilemma in terms of memory cost and computation cost when handling large-scale signals. In fact, both costs suffer from solving the least square problem, which cannot be conducted by operator directly.

First, we follow the notations mentioned in the previous section and briefly introduce OMP \cite{Tropp2007} in a step-by-step manner as follows.
\begin{enumerate}
\item Initialize the residual measurement $r_{0}=y$ and initialize the set of selected supports $S_{0}=\{\}$.
Let the initial iteration counter be $i=1$.
Let $A_{S}$ be the sub-matrix of $A$, where $A_{S}$ consists of the column of $A$ with indices belonging to the support set $S$. $A^{T}$ is the transpose of $A$.
\item Detect supports (or positions of significant components) by seeking maximum correlation from
\begin{equation}
t=\arg\!\max_{t} | \left(A^{T}r_{i-1}\right)_{t} |,
\label{eq:OMP maximize correlation}
\end{equation}
and update the support set $S_{i}=S_{i-1}\cup \{t\}$.
\item Solve a least square problem $s_{i}=(A_{S_{i}}^{T} A_{S_{i}})^{-1} A_{S_{i}}^{T} y$, and
update residual measurement  $r_{i}=y-A_{S_{i}}s_{i}$.
\item If $i=K$, stop; otherwise, $i=i+1$ and return to Step 2.
\end{enumerate}
%In this case, support detection is Step 2 and solving least square equation is Step 3.

Tropp and Gilbert \cite{Tropp2007} derive that the computational complexity of OMP is bounded by Step 2 (support detection) with $O(MN)$ and Step 3 (solving least square problem) with $O(MK)$, and the memory cost is $O(MN)$ when $A$ is executed in a matrix form.
In this paper, we call it matrix-based OMP (M-OMP).
As mentioned in Sec. \ref{Sec: RW}, $A$ can be designed to be an SRM conducted by operator.
Nevertheless, operator is only helpful for certain operations such as $A$ and $A^{T}$.
For example, if $A$ is a partial random Fourier matrix, $Ax = \mathcal{D}\left( FFT(x)\right)$ and $A^{T}y = IFFT(\hat{y})$ can be quickly calculated, where $\hat{y}=[ y^{T},\underbrace{0,...,0}_{N-M} ]^{T}$ and $IFFT(\cdot)$ denotes inverse FFT function.
In this paper, we call it operator-based OMP (O-OMP).

Unfortunately, the key is that $(A_{S_{i}}^{T}A_{S_{i}})^{-1}$ still cannot be quickly computed in terms of operator.
Hence, it requires $O(KM)$ to store $A_{S_{i}}$.
Instead of calculating $(A_{S_{i}}^{T}A_{S_{i}})^{-1}$ directly, by preserving the Cholesky factorization of $(A_{S_{i-1}}^{T}A_{S_{i-1}})^{-1}$ at the $(i-1)^{th}$ iteration for subsequent use, Step 3 is accelerated and the memory cost is reduced to $O(K^2)$.
Moreover, it is worth mentioning that the sparsity $K$ of natural signals is often linear to signal length $N$.
For example, the number of significant DCT coefficients for an image usually ranges from $0.01N$ to $0.1N$.
Also CS has shown that $M$ must be linear to $K\log N$ for successful recovery with high probability.
Under the circumstance, $O(N)$ is equivalent to $O(K)$ and $O(M)$ in the sense of big-O notation.
We can see that when $N$ is large, $O(K^2)$ dominates the memory cost because $K^2 \gg N$.
Thus, Step 3 makes OMP infeasible for recovering large-scale signals.

In fact, greedy algorithms share the same operations, {\em i.e.}, $A^{T}r_{i-1}$ in Step 2 and $(A_{S_{i}}^{T} A_{S_{i}})^{-1} A_{S_{i}}^{T}y$ in Step 3, where the main difference is that the support set $S_{i}$ is found by different ways, and face the same dilemma.
A simple experiment is conducted and results are shown in Fig. \ref{fig:MemoryCost} to illustrate the comparison of memory usage among M-OMP, O-OMP, and ideal cost (which is defined as $N \times$ 8 bytes required for Double data type in Matlab).
The OMP code running in Matlab was downloaded from SparseLab (http://sparselab.stanford.edu).
Obviously, though the memory cost of O-OMP is reduced without storing $A$, it still far higher than that of ideal cost.
It is also observed that both M-OMP and O-OMP exhibit the same slope. Specifically, M-OMP and O-OMP cost $O(MN)$ and $O(K^{2})$, respectively.
As mentioned before, since $M,K$ are linear to $N$, it means $O(K^{2})=O(N^{2})$ and $O(MN)=O(N^{2})$ such that both orders of memory cost of  M-OMP and O-OMP  are the same and are larger than that of ideal case.
Consequently, solving the least square problem becomes a bottleneck in greedy approaches.
This challenging issue will be solved in this paper.

\begin{figure}[h]
  \centering
  \centering{\epsfig{figure=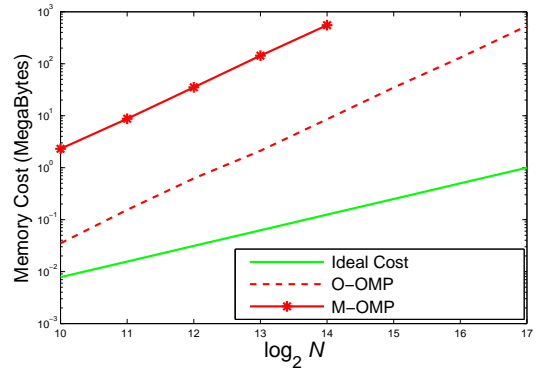,width=8cm}}
\hfill
\caption{Memory cost comparison among the matrix-based OMP, operator-based OMP, and ideal cost under $M=\frac{N}{4}$ and $K=\frac{M}{4}$. }
\label{fig:MemoryCost}
\end{figure}

\section{Proposed Method for Speeding Recovery of Greedy Algorithms}\label{sec:Proposed Method}
In this section, we first introduce how to determine a sensing matrix, which can be easily implemented by operator.
Then, we reformulate the least square problem as a weighted one, which is solved by conjugate gradient (CG) method to avoid involving the sub-matrices, $A_{S_{i}}$ or $A_{S_{i}}^{\dag}$.
We first prove that the solutions to the least square problem and its weighted counterpart are the same, and then prove that the solutions to the weighted least square problem and its CG-based counterpart are the same.
% of weighted least square problem is equivalent to that of least square equation.

\subsection{Sensing Matrix}\label{sec: Sensing Matrix Design}
A random Gaussian matrix is commonly used as the sensing matrix as it and any orthonomal basis can pair together to satisfy RIP and MIP in the context of compressive sensing.
The use of random Gaussian matrix as the sensing matrix, however, leads to the overhead of storage and computation costs.
Although storage consumption can be overcome by using a seed to generate a random Gaussian matrix, it still encounters high computational cost.
%Hence, we attempt to design a sensing matrix such that the reconstruction is faster with low computation cost.

In \cite{Do2012}, Do {\em et al.}  propose a framework, called Structurally Random Matrix (SRM), defined as:
\begin{equation}
	\Phi = DFR,
\label{DFR}
\end{equation}
where $D \in\mathbb{R}^{M \times N}$ is a sampling matrix, $F \in\mathbb{R}^{N \times N}$ is an orthonormal matrix, and $R \in\mathbb{R}^{N \times N}$ is a uniform random permutation matrix (randomizer).
Since the distributions between a random Gaussian matrix and SRM's $\Phi$ are verified to be similar, we choose Eq. (\ref{DFR}) as the sensing matrix for our use.
It should be noted that $D$ and $R$ can inherently be replaced by operators but $F$ depends on what kind of orthonormal basis is used.
It is obvious that any fast transform can be adopted as $F$.
In our paper, we set $F$ to the Discrete Cosine Transform (DCT) due to its fast computation and cost-effectiveness.
There are literatures discussing the design of sensing matrix but it is not the focus of our study here.

\subsection{Reformulating Least Square Problem: Weighted Least Square Problem} \label{sec: Weighted least square}
In Sec. \ref{sec:Problem Formulation}, we describe that the bottleneck of greedy algorithms is the least square problem.
To solve the problem, it is reformulated as a weighted least square problem in our method.
To do that, we first introduce a weighted matrix $W\in\mathbb{R}^{N \times N}$ defined as:
\begin{equation}
	W_{S_{i}}[j,j] =\left \{ \begin{aligned} 1, j \in S_{i}  \\ 0, j \not\in S_{i} \end{aligned} \right.,
\label{eq: W}
\end{equation}
where, without loss of generality, $S_{i}=\{1,2,...,i\}$ denotes a support set at the $i$-th iteration and $W_{S_{i}}[j,j]$ is the $(j,j)^{th}$ entry of $W_{S_{i}}$.
As can be seen in Eq. (\ref{eq: W}), the weighte matrix $W$ is designed to make the supports unchanged.

Then, we prove that both solutions of the least square problem and its weighted counterpart are the same.
% in entries belonging to the support.
\begin{theorem}
\label{theorem:equivalent form}
Suppose the sub-matrix $ A_{S_{i}} \in \mathbb{R}^{M \times K}$ of $A$ has full column rank with support set $S_i$.
Let $s_{i} \in \mathbb{R}^{K}$ be the solution to the least square problem:
\begin{equation}
    	s_{i} = \arg\!\min_{\hat{s}} \| y - A_{S_{i}}\hat{s} \|_{2},
    \label{eq: minimize least square equation}
\end{equation}
and let $\theta_{i} \in \mathbb{R}^{N}$ be the solution to the weighted least square problem:
  \begin{equation}
    	\theta_{i} = \arg\!\min_{\hat{\theta}} \| y - AW_{S_{i}}\hat{\theta} \|_{2}.
  \label{eq: wighted minimize least square equation}
\end{equation}
We have
$$ s_{i}[j] = \theta_{i}[j], \quad \text{for} \quad 1\leq j\leq i.$$
\end{theorem}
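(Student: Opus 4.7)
The plan is to reduce the weighted problem to the standard one by exploiting the block structure of $W_{S_i}$. First I would unpack the action of $W_{S_i}$: by the definition in Eq.~\eqref{eq: W}, for any $\hat\theta \in \mathbb{R}^N$ the vector $W_{S_i}\hat\theta$ has entry $\hat\theta[j]$ at positions $j \in S_i$ and is zero elsewhere. Consequently $A W_{S_i}\hat\theta$ is a linear combination of exactly those columns of $A$ indexed by $S_i$, with coefficients given by the restriction of $\hat\theta$ to $S_i$. Writing $\hat\theta_{S_i} \in \mathbb{R}^{|S_i|}$ for this restriction, we obtain the identity
\begin{equation}
A W_{S_i}\hat\theta \;=\; A_{S_i}\,\hat\theta_{S_i},
\label{eq:plan-identity}
\end{equation}
so that the residual $\|y - AW_{S_i}\hat\theta\|_2$ depends on $\hat\theta$ only through $\hat\theta_{S_i}$, and the entries of $\hat\theta$ at indices outside $S_i$ are completely free.

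Next I would use \eqref{eq:plan-identity} to compare minimizers directly. Any minimizer $\theta_i$ of \eqref{eq: wighted minimize least square equation} must have its restriction $\theta_{i,S_i}$ solve $\min_{u \in \mathbb{R}^{|S_i|}} \|y - A_{S_i}u\|_2$, which is precisely the problem \eqref{eq: minimize least square equation} defining $s_i$. Because $A_{S_i}$ is assumed to have full column rank, the normal equations $A_{S_i}^T A_{S_i} u = A_{S_i}^T y$ have a unique solution, so $\theta_{i,S_i} = s_i$. Translating back to indices, for $j \in S_i$ we get $\theta_i[j] = s_i[j]$, and under the stated convention $S_i = \{1,2,\dots,i\}$ this gives the claimed equalities for $1 \le j \le i$.

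The main subtlety, which I would address explicitly, is that $\theta_i$ is not uniquely determined by \eqref{eq: wighted minimize least square equation}: its entries on the complement of $S_i$ may be anything, since they are annihilated by $W_{S_i}$. The theorem is correctly phrased so that the conclusion is asserted only for the coordinates $j \in S_i$, which is exactly the part pinned down uniquely by the full column rank assumption on $A_{S_i}$. No further obstacles arise; the proof reduces to the observation \eqref{eq:plan-identity} together with standard uniqueness for full-column-rank least squares.
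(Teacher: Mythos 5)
Your proposal is correct and rests on the same core facts as the paper's proof: the weighted objective $\|y - AW_{S_i}\hat\theta\|_2$ depends only on the coordinates of $\hat\theta$ indexed by $S_i$ (equivalently, the off-support coordinates span the null space of $AW_{S_i}$), and full column rank of $A_{S_i}$ pins those support coordinates down uniquely. The paper packages this as ``exhibit the particular solution $[s_i^T\ 0]^T$ and show the solution set is that vector plus $\mathrm{span}(e_{i+1},\dots,e_N)$,'' while you argue via restriction to $S_i$ and uniqueness of the reduced normal equations; these are the same argument in different clothing, and yours is, if anything, slightly more direct.
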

\begin{proof}
Let $\theta_{*} = [s_{i}^{T}\ 0]^{T} =   \begin{bmatrix}
       \left(A_{S_{i}}^{T}A_{S_{i}}\right)^{-1} A_{S_{i}}^{T}    \\
    0
\end{bmatrix} y$ be a solution minimizing $\| y - AW_{S_{i}}\theta_{*} \|_{2}$. Then $\{ \theta_{*} + v| v \in Null(AW_{S_{i}}) \}$ is the solution set of Eq. (\ref{eq: wighted minimize least square equation}).
Since $A_{S_{i}}$ has full column rank with $rank(A_{S_{i}})=K$ and $Null(AW_{S_{i}})=span\left(e_{i+1},e_{i+2},...,e_{N}\right)$, where $e_{i}$ is a standard basis.
Thus, no matter what $v$ is,  the first $i$ entries of $\theta_{*} + v$ are invariant. We complete the proof.
\end{proof}

\subsection{Reformulating Weighted Least Square as CG-based Weighted Least Square Problem} \label{sec: CG-Weighted least square}
We can see from Eq. (\ref{eq: wighted minimize least square equation}) that the introduce of weighted matrix involves $AW_{S_{i}}$ instead of submatrix $A_{S_{i}}$ so that $A$ can be calculated by fast operator.
Nonetheless, the closed-form solution of Eq. (\ref{eq: wighted minimize least square equation}) is
$  \begin{bmatrix}
       \left(A_{S_{i}}^{T}A_{S_{i}}\right)^{-1} A_{S_{i}}^{T}    \\
    0
\end{bmatrix} y$ and
still faces the difficulty in that $ \left(A_{S_{i}}^{T}A_{S_{i}}\right)^{-1} A_{S_{i}}^{T}  $ cannot be easily implemented by operator.
Instead of seeking closed-form solutions, we aim to explore the first-order methods ({\em e.g.}, gradient descent), which have the following advantages:
(1) the operations only involve $A$ and $A^{T}$ instead of the pseudo-inverse of $A$,
(2) the convergence rate is nearly dimension-independent \cite{Cevher2014}, and
(3) if $A$ is a sparse matrix, the computation cost can be further reduced.
Conjugate gradient (CG) \cite{Sluis1986}\cite{Kaasschieter1988} is a well-known first order method to numerically approximate the solution of symmetric, positive-definite or positive-semidefinite system of linear equations.
Thus, CG benefits from the advantages of the first-order method.
However, the matrix $AW_{S_{i}}$ in Eq. (\ref{eq: wighted minimize least square equation}) is not symmetric.
Thus, we reformulate Eq. (\ref{eq: wighted minimize least square equation}) in terms of CG as follows.
We will prove that both the solutions to the weighted least square problem and CG-based weighted least square problem are the same.

\begin{theorem}
\label{theorem:2nd equivalent form }
Suppose the sub-matrix $ A_{S_{i}} \in \mathbb{R}^{M \times K}$ of $A$ has full column rank with support set $S_{i}$.
Let $\theta_{i} \in \mathbb{R}^{N}$ be the solution to weighted least square problem defined in Eq. (\ref{eq: wighted minimize least square equation}).
%$$
%   	\theta_{i} = \arg\!\min_{\hat{\theta}} \| y - AW_{S_{i}}\hat{\theta} \|_{2},
%$$
Let $\widetilde{\theta} \in \mathbb{R}^{N}$ be the solution to the CG-based weighted least square problem reformulated from Eq. (\ref{eq: wighted minimize least square equation}) as:
\begin{equation}
    	\widetilde{\theta}_{i} = \arg\!\min_{\hat{\theta}} \| W_{S_{i}}^{T}A^{T}(y - AW_{S_{i}}\hat{\theta}) \|_{2}.
    \label{eq:wighted minimize least square equation 2nd}
\end{equation} 
Then
$$ \theta_{i}[j] =\widetilde{\theta}_{i}[j], \quad \text{for} \quad 1\leq j\leq i.$$
\end{theorem}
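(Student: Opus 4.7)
The plan is to reduce the CG-based problem in Eq.~(\ref{eq:wighted minimize least square equation 2nd}) to the same normal equations that characterize the minimizers of Eq.~(\ref{eq: wighted minimize least square equation}), and then re-use the argument from Theorem~\ref{theorem:equivalent form} to pin down the first $i$ coordinates.

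First I would introduce $B = AW_{S_{i}}$ and recall the standard fact that $\arg\min_{\hat{\theta}} \|y - B\hat{\theta}\|_{2}$ coincides with the solution set of the normal equation
\[ B^{T} B \hat{\theta} = B^{T} y, \]
i.e.\ $W_{S_{i}}^{T} A^{T} A W_{S_{i}} \hat{\theta} = W_{S_{i}}^{T} A^{T} y$. Next I would rewrite the objective of Eq.~(\ref{eq:wighted minimize least square equation 2nd}) as $\|B^{T} y - B^{T} B \hat{\theta}\|_{2}$, a nonnegative quantity that vanishes exactly when the same normal equation is satisfied. Since $B^{T} y \in \text{range}(B^{T}) = \text{range}(B^{T} B)$, the normal equation is consistent, so the minimum value of Eq.~(\ref{eq:wighted minimize least square equation 2nd}) is indeed zero and its minimizer set equals that of Eq.~(\ref{eq: wighted minimize least square equation}).

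With that equivalence in hand, every $\widetilde{\theta}_{i}$ is automatically a minimizer of the weighted least square problem. The remainder of the argument then echoes the proof of Theorem~\ref{theorem:equivalent form}: the full column rank of $A_{S_{i}}$ forces $Null(A W_{S_{i}}) = span(e_{i+1}, \ldots, e_{N})$, so any two minimizers of Eq.~(\ref{eq: wighted minimize least square equation}) can differ only on coordinates outside $S_{i}$. Since $\theta_{i}$ is itself such a minimizer, this immediately yields $\theta_{i}[j] = \widetilde{\theta}_{i}[j]$ for $1 \leq j \leq i$.

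The only mildly delicate step is the consistency of $B^{T} B \hat{\theta} = B^{T} y$, which I would dispatch in one line via $\text{range}(B^{T}) = \text{range}(B^{T} B)$ (equivalently, $Null(B) = Null(B^{T} B)$). After that the argument is bookkeeping on top of Theorem~\ref{theorem:equivalent form}, and I do not anticipate any genuine obstacle; the content of the statement is essentially the observation that the minimizers of the CG reformulation are precisely the minimizers of the original weighted problem, so restriction to the support $S_{i}$ is forced to match.
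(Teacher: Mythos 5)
Your proposal is correct and follows essentially the same route as the paper: both arguments show that the minimizer set of Eq.~(\ref{eq:wighted minimize least square equation 2nd}) coincides with that of Eq.~(\ref{eq: wighted minimize least square equation}) --- you via consistency of the normal equations $W_{S_i}^TA^TAW_{S_i}\hat{\theta}=W_{S_i}^TA^Ty$ and the fact that $Null(B^TB)=Null(B)$, the paper via the equivalent observation that $\mathcal{C}(A_{S_i})\cap Null(A_{S_i}^T)=\{0\}$ forces the perturbation $v$ into $Null(AW_{S_i})$ --- and then both conclude by noting that full column rank gives $Null(AW_{S_i})=span(e_{i+1},\ldots,e_N)$, so the first $i$ coordinates are pinned down. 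No gap; the difference is purely one of packaging.
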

\begin{proof}
We first note that now $W_{S_{i}}^{T}A^{T}AW_{S_{i}}$ is symmetirc to meet the requirement of CG.
Since $  \begin{bmatrix}
         \left(A_{S_{i}}^{T}A_{S_{i}}\right)^{-1} A_{S_{i}}^{T}      \\
    0
\end{bmatrix} y$ is an optimal solution for both Eq. (\ref{eq: wighted minimize least square equation}) and Eq. (\ref{eq:wighted minimize least square equation 2nd}), the solution set of  Eq. (\ref{eq:wighted minimize least square equation 2nd}) can be expressed as:
$$\{   \begin{bmatrix}
         \left(A_{S_{i}}^{T}A_{S_{i}}\right)^{-1} A_{S_{i}}^{T}      \\
    0
\end{bmatrix} y + v | AW_{S_{i}}v \in Null( W_{S_{i}}^{T}A^{T})  \}.$$
It should be noted that $Null( W_{S_{i}}^{T}A^{T}) = Null( A_{S_{i}}^{T}) $.
In addition,  $AW_{S_{i}}v \in \mathcal{C}(AW_{S_{i}})$, where $\mathcal{C}(AW_{S_{i}})$ denotes the column space of $AW_{S_{i}}$ and $ \mathcal{C}(AW_{S_{i}}) =  \mathcal{C}(A_{S_{i}})$.
Since $\mathcal{C}(A_{S_{i}}) \bigcap Null( A_{S_{i}}^{T}) = \{0\}$, it implies $AW_{S_{i}}v = 0$. In other words, $v \in Null(AW_{S_{i}})$. Due to $Null(AW_{S_{i}})=span\left(e_{i+1},e_{i+2},...,e_{N}\right)$, the first $i$ entries of $\begin{bmatrix}
       \left(A_{S_{i}}^{T}A_{S_{i}}\right)^{-1} A_{S_{i}}^{T}    \\
    0
\end{bmatrix}y + v$ are invariant.
Similarly, the solution set of  Eq. (\ref{eq: wighted minimize least square equation}) is  $$\{   \begin{bmatrix}
       \left(A_{S_{i}}^{T}A_{S_{i}}\right)^{-1} A_{S_{i}}^{T}    \\
    0
\end{bmatrix} y + v | v \in Null( AW_{S_{i}})  \}.$$ The first $i$ entries of $\begin{bmatrix}
         \left(A_{S_{i}}^{T}A_{S_{i}}\right)^{-1} A_{S_{i}}^{T}      \\
    0
\end{bmatrix} y  + v$ are also invariant. We complete the proof.
\end{proof}

So far, we prove that, with correct support detection, the optimal solution to the CG-based weighted least square problem in Eq. (\ref{eq:wighted minimize least square equation 2nd}) is equivalent to that to the original least square problem in Eq. (\ref{eq: minimize least square equation}).
In addition, the matrix $W_{S_{i}}^{T}A^{T}AW_{S_{i}}$  in Eq. (\ref{eq:wighted minimize least square equation 2nd}) is symmetric and can quickly be solved by CG method.
 Nevertheless, \cite{Kaasschieter1988} points out that the system of linear equations with a positive-semidefinite matrix diverges unless some conditions are satisfied.
Thus, Theorem \ref{theorem:convergence of CG} further shows the condition of convergence.

\begin{theorem} \cite{Kaasschieter1988}
If $W_{S_{i}}^{T}A^{T}y  \in \mathcal{C}(W_{S_{i}}^{T}A^{T}AW_{S_{i}})$, CG method converges but the solution is not unique.
\label{theorem:convergence of CG}
\end{theorem}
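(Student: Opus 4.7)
The plan is to invoke the standard theory for CG applied to a consistent symmetric positive-semidefinite linear system, as developed in \cite{Kaasschieter1988}. Let $B := W_{S_{i}}^{T}A^{T}AW_{S_{i}}$ and $b := W_{S_{i}}^{T}A^{T}y$; the first-order optimality condition for Eq. (\ref{eq:wighted minimize least square equation 2nd}) is the normal equation $B\hat{\theta} = b$. The matrix $B$ is manifestly symmetric and positive-semidefinite, and the hypothesis $b \in \mathcal{C}(B)$ is exactly the consistency condition guaranteeing that this normal equation admits a solution.

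To establish convergence I would run CG starting from $\hat{\theta}^{(0)} = 0$, whose initial residual is $r_{0} = b \in \mathcal{C}(B)$. A straightforward induction shows every subsequent residual $r_{k}$ and search direction $p_{k}$ lies in the Krylov subspace $\mathcal{K}_{k}(B, r_{0})$, which is contained in $\mathcal{C}(B)$ because $\mathcal{C}(B)$ is $B$-invariant. On this subspace the restricted operator $B|_{\mathcal{C}(B)}$ is actually positive definite: for a symmetric $B$ one has $\mathcal{C}(B) = Null(B)^{\perp}$, so if $x \in \mathcal{C}(B)\setminus\{0\}$ satisfied $x^{T}Bx = 0$ then $Bx = 0$, placing $x \in Null(B) \cap \mathcal{C}(B) = \{0\}$, a contradiction. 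The CG iteration thus reduces to classical positive-definite CG on $B|_{\mathcal{C}(B)}$, so no denominator $p_{k}^{T}Bp_{k}$ vanishes and finite termination (within at most $rank(B)$ steps) holds, yielding a vector $\widetilde{\theta}_{i}$ with $B\widetilde{\theta}_{i} = b$.

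For non-uniqueness I would invoke the null-space identification already carried out in the proof of Theorem \ref{theorem:2nd equivalent form }: $Null(B) = Null(AW_{S_{i}}) = span(e_{i+1}, \ldots, e_{N})$ is non-trivial, so for any $v \in Null(B)$ the vector $\widetilde{\theta}_{i} + v$ also solves $B\hat{\theta} = b$ and achieves the same objective value in Eq. (\ref{eq:wighted minimize least square equation 2nd}); the minimizer set is therefore a positive-dimensional affine subspace. The main obstacle, in my view, is not any single step but rather articulating precisely what ``CG converges'' means in the semidefinite setting: the CG output depends on the starting iterate through its orthogonal projection onto $Null(B)$, and only its $\mathcal{C}(B)$-component is determined by $B$ and $b$. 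Once this subtlety is granted, Theorem \ref{theorem:2nd equivalent form } ensures the non-uniqueness is harmless for greedy recovery, since the first $i$ entries of every minimizer agree and these are the only coordinates used in subsequent iterations.
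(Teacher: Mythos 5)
Your argument is correct, but it is worth noting that the paper does not actually prove this statement: Theorem \ref{theorem:convergence of CG} is imported verbatim from \cite{Kaasschieter1988} as a citation, and the only work the paper does in its vicinity is to \emph{verify the hypothesis} --- namely that $W_{S_{i}}^{T}A^{T}y$ lies in $\mathcal{C}(W_{S_{i}}^{T}A^{T}AW_{S_{i}})$, using the block structure $W_{S_{i}}^{T}A^{T}y=[\,y^{T}A_{S_{i}}\ 0\,]^{T}$ and the full rank of $A_{S_{i}}^{T}A_{S_{i}}$. What you supply instead is a self-contained proof of the cited result itself: the consistency condition makes the normal equation solvable, the Krylov iterates launched from $\hat{\theta}^{(0)}=0$ remain in the $B$-invariant subspace $\mathcal{C}(B)=Null(B)^{\perp}$ on which $B$ is positive definite, so classical finite-termination CG theory applies, and non-uniqueness follows from $Null(B)=Null(AW_{S_{i}})\neq\{0\}$. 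This is the standard argument underlying Kaasschieter's result and it is sound; your observation that the computed limit depends on the component of the initial iterate in $Null(B)$ is exactly the subtlety that Theorem \ref{theorem:2nd equivalent form } is designed to neutralize (only the first $i$ coordinates matter, and those are common to all minimizers). In short, your proof buys self-containedness and makes explicit why the semidefiniteness is harmless, whereas the paper's approach buys brevity by delegating the convergence theory to the reference and spending its effort on checking that the consistency hypothesis holds for the specific matrices arising in the greedy iteration; if you intend your argument to replace the citation, you should also include that hypothesis-verification step, since the theorem as stated is conditional on $b\in\mathcal{C}(B)$.
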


Now, we check whether the CG-based weighted least square problem converges.
Again, let $S_{i}=\{1,2,...,i\}$ denote a support set.
Then, we have $W_{S_{i}}^{T}A^{T}y = [ y^{T}A_{S_{i}} \ 0]^{T}$ and $$\mathcal{C}(W_{S_{i}}^{T}A^{T}AW_{S_{i}}) = \mathcal{C}(  \begin{bmatrix}
       A_{S_{i}}^{T} A_{S_{i}}    \\
    0
\end{bmatrix}) .$$
Because $ A_{S_{i}}^{T} A_{S_{i}} \in \mathbb{R}^{i \times i}$ is full rank, the first $i$ entries of $ W_{S_{i}}^{T}A^{T}y$ must be spanned by the basis of $ A_{S_{i}}^{T} A_{S_{i}}$.
The remaining $N-i$ entries are $0$ and trivial.
Thus, it implies the CG part of our CG-based weighted least square solver satisfies Theorem \ref{theorem:convergence of CG} and converges.

\subsection{Speeding Orthogonal Matching Pursuit and Complexity Analysis} \label{sec: OP CG}
In the previous section, we descirbe the proposed CG-based weighted least square solver.
In this section, we first show that the CG-based solver can be implemented easily by operator.
Then, we combine it with OMP as a new paradigm to achieve fast OMP.
Moreover, we discuss the convergence rate of CG and derive the computation complexity of proposed operator-based OMP via CG (dubbed as CG-OMP).
%For clarity, our proposed operator-based OMP is abbreviated as P-OMP.
Finally, we conclude that the memory cost of CG-OMP achieves ideal $O(N)$ if $\Psi$ is also conducted by operator.

\begin{algorithm}[!t]
\small
\centering
\setlength{\abovecaptionskip}{0pt}
\setlength{\belowcaptionskip}{0pt}
\caption{Proposed Orthogonal Matching Pursuit}
\label{alg:Operator-based OMP}
\begin{tabular}[t]{p{8.4cm}l}
%\hline
\hline\\
\textbf{Input:} $\bm{y}$, $A$, $K$;\quad \textbf{Output:} $s_{K}$; \\
\textbf{Initialization:} $i=1$, $r_{0}=y$,$s_{0}=A^{T}y$, $S_{0}=\{ \}$; \\
\hline\hline\\
01. \textbf{function} \textbf{Proposed Operator-based OMP}()\\
02. \quad\textbf{for} $i = 1$ to $K$ \\
03. \quad\quad $t = \arg\!\max_{\hat{t}} \left| (A^{T}r_{i-1})_{\hat{t}}  \right|$;\\
04. \quad\quad $S_{i} = S_{i-1} + t  $;\\
05. \quad\quad Assign $W_{S_{i}}$ according to Eq. (\ref{eq: W});\\
06. \quad\quad $s_{i} =$\textbf{CG}$\left( A,W_{S_{i}},y,i \right)$;     \\
07. \quad\quad $r_{i} = y - As_{i}$;     \\
07. \quad\quad $i=i+1$;     \\
08. \quad\textbf{end for}\\
09. \quad\textbf{Return}:$s_{K}$;\\
10. \textbf{end} \textbf{function}\\
11. \textbf{function} $\left[ \ \hat{s} \ \right]$=\textbf{CG}$\left( A,W_{S_{i}},y,i \right)$\\
12. \quad $b = W_{S_{i}}^{T}A^{T}y$, $H = W_{S_{i}}^{T}A^{T}AW_{S_{i}}$;\\
13. \quad $d_{0}=r_{0}=b$, $\hat{s}=0$;\\
14. \quad $j=0$;\\
15. \quad\textbf{while}($\| r_{j} \|_{2} \leq \xi $) \\
16. \quad\quad $\alpha_{j}= \frac{r_{j}^{T}r_{j}}{d^{T}_{j}Hd_{j}}$;\\
17. \quad\quad $\hat{s} = \hat{s} + \alpha_{j}d_{j}  $;\\
18. \quad\quad $r_{j+1} = r_{j} - \alpha_{j}Hd_{j}  $;\\
19. \quad\quad $\beta_{j+1} = \frac{r_{j+1}^{T}r_{j+1}}{r_{j}^{T}r_{j}}  $;\\
20. \quad\quad $d_{j+1} = r_{j+1}+\beta_{j+1}d_{j}  $;\\
21. \quad\quad $j = j+1$;\\
22. \quad\textbf{end while}\\
23. \quad\textbf{Return}:$\hat{s}$;\\
24. \textbf{end} \textbf{function}\\
\hline
\end{tabular}
\end{algorithm}
\renewcommand\arraystretch{1}

Algorithm \ref{alg:Operator-based OMP} describes the proposed CG-OMP method (Lines 01 - 10), which employs a CG technique (Lines 11 - 23).
It is worth mentioning that $\xi$ in Line 15 controls the precision of CG method. If $\xi = 0$, it means exact precision such that the output of CG method is equal to least square solution. If $\xi >0$, CG method attains finite precision. However, the result with finite precision is not certainly worse than that with exact precision especially under noisy interference, as later discussed in the 4-th paragraph of Sec. \ref{Sec: CCC}.

Now we check whether all matrix operations can be implemented by operators in Algorithm \ref{alg:Operator-based OMP} in the following.
\begin{itemize}
\item $W_{S_{i}} \in \mathbb{R}^{N \times N}$:\\
$W_{S_{i}}$ is a diagonal matrix. Thus, $W_{S_{i}}x$ is equal to assign $x[j]=0$ for $j \not\in S_{i}$. The memory cost is $O(K)$ to store the indices of support set and the computation cost is $O(N)$.
\item $\Phi = DFR \in \mathbb{R}^{M \times N} $:\\
$Dx$ is equal to randomly choose $M$ indices from $N$ entries in $x$.
$Fx=DCT(x)$, where DCT can be speeded by FFT.
$Rx$ is equal to randomly permute the indices of vector $x$.
The memory cost is bounded by $O(N)$ in order to store the sequence of random permutation and the computation cost is $O(N \log N)$.
\item $\Psi \in \mathbb{R}^{N \times N} $:\\
If $\Psi$ is a deterministic matrix, it is not necessarily to be stored.
Thus, the memory cost of $\Psi x$ is $O(N)$.
The computational cost depends on whether $\Psi x$ can be speeded up.
In the worst case, it costs $O( N^2)$.
\end{itemize}
Other operations such as $r_{j}^{T}r_{j}$ (Line 16) only involve multiplications between vectors.
Thus, both memory cost and computation cost are bounded by $O(N)$.

From the above analysis, one can see that the computation cost of Algorithm \ref{alg:Operator-based OMP} is mainly bounded by $\Psi$.
We discuss some applications below, where the computation cost involving $\Psi$ is low.
For compressive sensing of images, wavelet transform is often chosen as $\Psi$ such that $\Psi x$ costs $O(N)$.
For magnetic resonance imaging (MRI), partial Fourier transform is selected as the component $F$ of the sensing matrix expressed as $DFR$.
In this case, $\Psi$ is $I$ in order to satisfy MIP or RIP, and costs nothing.
Spectrum sensing is another application, where $\Psi$ is a discrete Fourier transform matrix done with $O(N \log N)$.

Now, the total cost of Algorithm \ref{alg:Operator-based OMP} is discussed.
Both the computation and memory costs of Lines 1-10 except Line 6 (CG method) will be $O(N \log N)$ and $O(N)$, respectively.
As for the memory cost of CG method, it needs $O(N)$.
Therefore, the total memory cost of Algorithm \ref{alg:Operator-based OMP} is bounded by $O(N)$.

In addition, the computation cost is related to two factors, {\em i.e.}, the number of iterations to converge in CG and $\Psi$.
They are further discussed as follows.

\begin{theorem}
\label{theorem: num of iteration to converge} (Theorem 2.2.3 in \cite{Kelley1995})
Let $H$ be symmetric and positive-definite.
Assume that there are exactly $k < N$ distinct eigenvalues in $H$.
Then, CG terminates in at most $k$ iterations.
\end{theorem}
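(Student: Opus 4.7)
The plan is to invoke the standard Krylov subspace characterization of the conjugate gradient iterates and then construct an annihilating polynomial whose degree matches the number of distinct eigenvalues. Throughout, let $\hat{s}^{*}$ denote the exact solution of $H\hat{s}=b$ and $e_{j}=\hat{s}_{j}-\hat{s}^{*}$ the error at iteration $j$.

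First I would establish the key structural fact about CG that will drive everything: from the recurrences in Lines 13--20 of Algorithm \ref{alg:Operator-based OMP}, an induction on $j$ shows that the search directions $d_{0},d_{1},\dots,d_{j-1}$ and the residuals $r_{0},r_{1},\dots,r_{j-1}$ span the Krylov subspace $\mathcal{K}_{j}(H,r_{0})=\mathrm{span}\{r_{0},Hr_{0},\dots,H^{j-1}r_{0}\}$, and that $\hat{s}_{j}-\hat{s}_{0}\in \mathcal{K}_{j}(H,r_{0})$. A classical property of CG (following from $H$-conjugacy of the $d_{j}$) is that $\hat{s}_{j}$ is in fact the minimizer of the energy functional $\|\hat{s}-\hat{s}^{*}\|_{H}$ over the affine set $\hat{s}_{0}+\mathcal{K}_{j}(H,r_{0})$.

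Next, I would translate this optimality into a polynomial statement. Since every element of $\mathcal{K}_{j}(H,r_{0})$ has the form $\tilde{p}(H)r_{0}$ with $\deg\tilde{p}\le j-1$, and $r_{0}=b-H\hat{s}_{0}=-He_{0}$, the error admits the representation $e_{j}=p_{j}(H)\,e_{0}$ for some polynomial $p_{j}$ of degree at most $j$ satisfying the normalization $p_{j}(0)=1$. By the optimality in the preceding step, $p_{j}$ minimizes $\|p(H)e_{0}\|_{H}$ among all polynomials of degree $\le j$ with $p(0)=1$.

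The heart of the proof is then the explicit construction: let $\lambda_{1},\dots,\lambda_{k}$ be the $k$ distinct eigenvalues of $H$ (all strictly positive by positive-definiteness) and set
\begin{equation*}
q(\lambda)=\prod_{i=1}^{k}\left(1-\frac{\lambda}{\lambda_{i}}\right).
\end{equation*}
Then $q$ has degree exactly $k$, satisfies $q(0)=1$, and vanishes at every eigenvalue of $H$. Using the spectral decomposition $H=U\Lambda U^{T}$ with orthonormal $U$, we get $q(H)=U\,q(\Lambda)\,U^{T}=0$, hence $q(H)e_{0}=0$. Because $p_{k}$ is the $H$-norm minimizer over degree-$k$ polynomials with $p(0)=1$, we obtain $\|e_{k}\|_{H}\le \|q(H)e_{0}\|_{H}=0$, so $e_{k}=0$ and $\hat{s}_{k}=\hat{s}^{*}$.

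The main obstacle is the first step, namely rigorously deriving the minimization characterization from the bare recurrences of Algorithm \ref{alg:Operator-based OMP}; this requires verifying $H$-conjugacy of the $d_{j}$ and mutual orthogonality of the $r_{j}$ via induction on the coefficient formulas for $\alpha_{j}$ and $\beta_{j+1}$. Once that machinery is in place, the polynomial argument is short and clean. Since this convergence bound is a well-known textbook result (quoted from \cite{Kelley1995}), one could alternatively cite it directly; I sketch the proof above for completeness.
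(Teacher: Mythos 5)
Your proposal is correct: the Krylov-subspace optimality of the CG iterates combined with the degree-$k$ annihilating polynomial $q(\lambda)=\prod_{i=1}^{k}(1-\lambda/\lambda_{i})$ is the standard argument, and the one step you flag (deriving the $H$-norm minimization property from the recurrences via $H$-conjugacy of the search directions) is routine. Note that the paper itself offers no proof of this statement -- it is quoted verbatim from Theorem 2.2.3 of \cite{Kelley1995} -- so your sketch simply reconstructs the textbook proof of the cited result, and it does so accurately.
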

In our case, $H = W_{S_{i}}^{T}A^{T}AW_{S_{i}}$  is positive-semidefinite instead of positive-definite. Thus, we derive the following theorem.
\begin{theorem}
\label{theorem: num of iteration to converge semi ver}
Given $b = W_{S_{i}}^{T}A^{T}y$ and $H = W_{S_{i}}^{T}A^{T}AW_{S_{i}}$, solving Eq. (\ref{eq:wighted minimize least square equation 2nd}) requires the number of iterations at most $K$ in CG, where $K$ is the sparsity of an original signal.
\end{theorem}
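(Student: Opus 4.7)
The plan is to reduce the positive-semidefinite case to the positive-definite case of Theorem \ref{theorem: num of iteration to converge} by restricting the CG dynamics to the column space $\mathcal{C}(H)$. The key structural observation is that $W_{S_{i}}$ is the diagonal indicator of the support set $S_{i}$, hence a projector of rank $i \leq K$. Combining this with the assumption that $A_{S_{i}}$ has full column rank, $H = W_{S_{i}}^{T}A^{T}AW_{S_{i}}$ has rank exactly $i$, so its spectrum consists of $i \leq K$ strictly positive eigenvalues together with the eigenvalue $0$ of multiplicity $N-i$.

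Next, I would invoke the convergence condition already established for our system, namely $b = W_{S_{i}}^{T}A^{T}y \in \mathcal{C}(H)$ (verified in the paragraph following Theorem \ref{theorem:convergence of CG}). Because CG produces iterates whose search directions and residuals lie in the Krylov subspace $\mathcal{K}_{j}(H,b) = \mathrm{span}\{b, Hb, \ldots, H^{j-1}b\}$, and because $\mathcal{C}(H)$ is invariant under $H$, all residuals $r_{j}$, directions $d_{j}$, and the correction to $\hat{s}$ remain inside $\mathcal{C}(H)$ throughout the iteration.

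Now I would restrict the operator $H$ to $\mathcal{C}(H)$. On this invariant subspace of dimension $i \leq K$, $H$ acts as a symmetric positive-definite operator with exactly $i$ distinct positive eigenvalues (at most $K$ in any case). Applying Theorem \ref{theorem: num of iteration to converge} to this restricted operator yields termination of CG in at most $i \leq K$ steps, which is the claimed bound.

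The main obstacle is justifying that the standard CG convergence argument, which is stated for positive-definite systems, legitimately transfers to the positive-semidefinite system $H\hat{\theta}=b$ through this restriction. Specifically, one must check that the CG recursion is well-defined (the denominators $d_{j}^{T}Hd_{j}$ never vanish for nonzero $d_{j}\in\mathcal{C}(H)$, which follows from positive-definiteness on the restricted subspace) and that the Krylov invariance argument indeed confines every quantity appearing in Algorithm \ref{alg:Operator-based OMP} to $\mathcal{C}(H)$. Once these two bookkeeping items are in place, the conclusion follows directly from Theorem \ref{theorem: num of iteration to converge}.
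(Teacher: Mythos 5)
Your proof is correct and follows essentially the same route as the paper: both reduce the positive-semidefinite system to a positive-definite one on the $i$-dimensional subspace where $H$ acts nontrivially (your $\mathcal{C}(H)$ is exactly the span of the first $i$ coordinates used in the paper's block decomposition) and then invoke Theorem \ref{theorem: num of iteration to converge}. The paper carries out the reduction concretely by exhibiting the block form of $H$ and comparing the CG recursion for Eq. (\ref{eq:wighted minimize least square equation 2nd}) term by term with that for the auxiliary problem on $A_{S_{i}}^{T}A_{S_{i}}$, whereas your Krylov-invariance argument (together with $b\in\mathcal{C}(H)$ and the zero initialization in Algorithm \ref{alg:Operator-based OMP}) reaches the same conclusion more abstractly; the only nit is that the restriction of $H$ has \emph{at most} $i$ distinct eigenvalues rather than exactly $i$, which only strengthens the bound.
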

\begin{proof}
Without loss of generality, let support set $S_{i}=\{1,2,...,i\}$. We start from another optimization problem:
 \begin{equation}
 \label{eq: CG for spd}
\bar{s}_{i} = \arg\!\min_{\hat{s}} \| A_{S_{i}}^{T}(y -  A_{S_{i}}\hat{s}) \|_{2}.
 \end{equation}
Following the same skill in Theorem \ref{theorem:2nd equivalent form }, $ \bar{s}_{i}$ is a unique and optimal solution to both Eq. (\ref{eq: minimize least square equation}) and Eq. (\ref{eq: CG for spd}). Thus, $ \bar{s}_{i}$ is also the solution of  Eq. (\ref{eq:wighted minimize least square equation 2nd}) for the first $i$ entries.
Furthermore, $ A_{S_{i}}^{T}A_{S_{i}}$ is non-singular such that $ A_{S_{i}}^{T}A_{S_{i}}$ is a positive-definite matrix and has at most $i$ distinct eigenvalues.
From Theorem \ref{theorem: num of iteration to converge}, solving Eq. (\ref{eq: CG for spd}) requires at most $i$ iterations.
Then, we have $b = W_{S_{i}}^{T}A^{T}y = [ y^{T}A_{S_{i}}\ 0]^{T}$ and  $H =  W_{S_{i}}^{T}A^{T}AW_{S_{i}} = \left[
\begin{aligned}
       A_{S_{i}}^{T} A_{S_{i}} \quad  0  \\
    0           \quad     \quad\quad        0
\end{aligned}
\right] $
in Eq.  (\ref{eq:wighted minimize least square equation 2nd}).
When we only take the first $i$ entries of $b$ and the left-top $i \times i $ submatrix of $H$ into consideration, it is equivalent to solving Eq.  (\ref{eq: CG for spd}).
This fact can be checked trivially by comparing each step of CG  for both optimization problems in Eq.  (\ref{eq:wighted minimize least square equation 2nd}) and Eq. (\ref{eq: CG for spd}).
Thus, the first $i$ entries of $\hat{s}$ in Eq. (\ref{eq:wighted minimize least square equation 2nd}) is updated in the same manner with that of $\bar{s}$ in Eq.  (\ref{eq: CG for spd}).
The remaining $N-i$ entries of $\hat{s}$ are unrelated to convergence because the $N-i$ entries of $H\hat{s}$ are zero. In sum, the required number of iterations to converge in Eq.  (\ref{eq:wighted minimize least square equation 2nd}) is identical to that in Eq. (\ref{eq: CG for spd}).
Hence, solving Eq. (\ref{eq:wighted minimize least square equation 2nd}) also requires at most $i$ iterations.
Since $i \leq K$, the number of iterations is at most $K$.
We complete the proof.
\end{proof}

Moreover, for each iteration in CG, the operation, $Hd_{j}$, dominates the whole computation cost.
It is obvious that if $\Psi$ can be executed with $O(N \log N)$ or even lower computation complexity, $Hd_{j}$ costs $O(N \log N)$ since $H$ involves $\Psi$, which spends $O(N \log N)$.
Note that the total computation complexity of CG-OMP will be $O(NK^{2}\log N)$, where $K^2$ comes from the outer loop in OMP, which needs $O(K)$, and solving Eq. (\ref{eq:wighted minimize least square equation 2nd}) that requires the number of iterations at most $K$ in CG, as proved in Theorem \ref{theorem: num of iteration to converge semi ver}.
On the other hand, if, under the worst case, $\Psi$ costs $O(N^{2})$ operations, the computation complexity of CG-OMP will be $O(N^{2}K^{2})$.
For applications that accept finite-precision accuracy instead of exact precision, CG \cite{Sluis1986} requires fewer steps ($\leq K$) to achieve approximation.
Under the circumstance, the complexity of CG-OMP nearly approximates $O(NK \log N)$ and $O(N^{2}K)$ operations for $\Psi$ with complexity $O(N \log N)$ and $O(N^{2})$, respectively.
%If the complexity of $\Psi$ is lower than $O(N \log N)$ ({\em i.e.} wavelet transformed matrix, $O(N)$), the computation cost of $Hd_{j}$ is dominated by the sensing matrix $\Phi$, costing $O(N \log N)$ operations.

Consequently, a reformulation for solving a least square problem is proposed based on CG such that the new matching pursuit methodology can deal wtih large-scale signals quickly.
It should be noted that, in the future, CG may be substituted with other first order methods that outperform CG.
The proposed idea can also be readily applied to other greedy algorithms to enhance their performance.

\subsection{Strategies for Reducing the Cost of $\Psi$} \label{secc:separable psi}
We further consider that if $\Psi$ is a learned dictionary, it will become a bottleneck for operator-based algorithms since it requires $O(N^{2})$ for storage.
To overcome this difficulty, $\Psi$ should be learned in a tensor structure.
Let $x = vec(X)$, where $X \in \mathbb{R}^{\sqrt{N} \times \sqrt{N}}$  and $vec(\cdot)$ is a vectorization operator.
That is, a two-dimensional vector is reshaped to a one-dimensional vector.
Then, we can learn a 2D dictionary such that $x= \Psi s = vec( \Psi_{1} S\Psi_{2}^{T}) $ with $s = vec(S)$ and $\Psi =  \Psi_{2} \otimes \Psi_{1}$ ($\otimes$ is a Kronecker product).
Under the circumstance, all operations in CG involving $\Phi\Psi s$ can be replaced by $\Phi vec(\Psi_{1}S\Psi_{2}^{T} )$.
Moreover, both $\Psi_{1}$ and $\Psi_{2}$ only require $O(N)$ in terms of memory cost.
In the literature, the existing algorithms for 2D separable dictionary learning include \cite{shi2013}\cite{Hawe2013}\cite{Hsieh2014}.

\section{Experimental Results}\label{Sec: Experimental Results}
In this section, we conduct comparisons among O-OMP, M-OMP, and CG-OMP in terms of the memory cost and computation cost.
The code of OMP was downloaded from SparseLab (http://sparselab.stanford.edu).
We have also applied the proposed fast and cost-effective least square solver to SP and OMPR in order to verify if our idea can speed the family of matching pursuit algorithms.
For SP and OMPR, we implemented the corresponding original matrix-based versions  (M-OMPR and M-SP), original operator-based versions  (O-OMPR and O-SP), and proposed CG-based (CG-OMPR and CG-SP).
It should be noted that although both SP and OMPR work well for increasingly adding a index to the support set like OMP, they are not accelerated by Cholesky factorization.

\subsection{Simulation Setting}
The simulations were conducted in an Matlab R2012b environment with an Intel CPU Q6600 and $4$ GB RAM under Microsoft Win7 ($64$ bits).

The model for the measurement vector in CS is $y=As+\eta$, where $\eta$ is an addictive Gaussian noise with standard deviation $\sigma_{\eta}$.
The input signal $s$ was produced via a Gaussian models as:
\begin{equation}
  s \sim pN\left( 0,\sigma_{on}^{2} \right),
  \label{eq:signalproduce}
\end{equation}
which was also adopted in \cite{Mohimani2009}.
In Eq. (\ref{eq:signalproduce}), $p$ is the probability of the activity of a signal and controls the number of non-zero entries of $x$.
Sparsity $K$ is defined to be $K=pN$.
$\sigma_{on}$ is standard deviation for input signal.
$\Phi$ is designed from SRM and $\Psi$ is chosen to be a discrete cosine transform.
In the following experiments, $M=\frac{N}{4}$, $K=\frac{M}{4}$, $p=0.0625$, $\sigma_{on}=1$, and $\sigma_{\eta}=0.01$.
%???Other parameters

\subsection{Memory Cost Comparison}
Fig. \ref{fig:MemoryCost_2} shows the comparison in terms of memory cost vs. signal length $N$.
Since the matrix-based algorithms (M-OMP, M-OMPR, and M-SP) run out of memory, their results are not shown in Fig. \ref{fig:MemoryCost_2} (note that the result regarding matrix-based OMP can be found in Fig. \ref{fig:MemoryCost}).
First, we can observe from Fig. \ref{fig:MemoryCost_2} that O-OMP, O-OMPR, and O-SP still require about $O(N^2)$ and fail to work when $N > 2^{18}$.
Second, in contrast with O-OMP, O-OMPR, and O-SP although CG-OMP, CG-OMPR, and CG-SP need more memory costs than the ideal cost, which is $O(N)$, their slopes are nearly identical, which seems to imply that the CG-based versions incur larger Big-O constants.
Thus, the proposed idea of fast and cost-effective least square solver is readily incorporated with the existing greedy algorithms to improve their capability of handling large-scale signals.

\begin{figure}[h]
  \centering
  \centering{\epsfig{figure=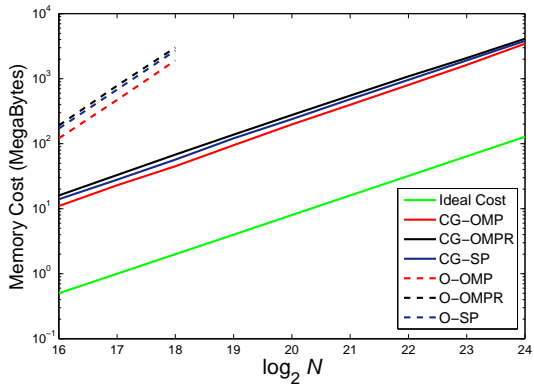,width=8cm}}
\hfill
\caption{Memory cost vs. Signal length, where  $M=\frac{N}{4}$ and $P=\frac{M}{4}$ ($p=0.0625$). }
\label{fig:MemoryCost_2}
\end{figure}

\subsection{Computation Cost Comparison}\label{Sec: CCC}
Before illustrating the computation cost comparison, we first discuss the convergence condition of CG in Algorithm \ref{alg:Operator-based OMP} as follows:
1) For exact precision, $\| r_{j} \|_{2} =0$ is set as the stopping criterion.
% and is checked at the beginning of while loop.
As described in Theorem \ref{theorem: num of iteration to converge semi ver}, it costs at most $K$ iterations to converge.
2) With inexact or finite precision, we set $\| r_{j} \|_{2} \leq \xi$ with $\xi > 0$.
Though the precision is finite, the signals, in practice, are interfered with noises and solutions with finite precision are adequate.
Under the circumstance, the number of iterations required to converge will be significantly decreased.

Figs. \ref{fig:ComputationCost_1}(a), (b), and (c) show the computation cost vs. signal length for OMP, OMPR, and SP, respectively, under the condition that the precision of CG was set to be exact.
In other words, we fix the same reconstruction quality for all comparisons in Fig. \ref{fig:ComputationCost_1} and discuss the computation costs for these different versions of algorithms.
It should be noted that some curves are cut because of running out of memory.

From Fig. \ref{fig:ComputationCost_1}, it is observed that the operator-based strategy (denoted with dash curves) or our proposed CG-based method (denoted with solid curves) can effectively reduce the order of computation cost in comparison with the matrix-based strategy (denoted with solid-star curves).
More specifically, in  Fig. \ref{fig:ComputationCost_1}(a), it is noted that M-OMP is only fast than CG-OMP with $N \leq 2^{14}$ due to smaller Big-O constant.
However, CG-OMP outperforms M-OMP in the end since the order of computation complexity of CG-OMP is lower than that of M-OMP.
In particular,  such improvements are significant for large-scale signals (with large $N$).
%???we see that  ({\em i. e.} M-OMP $O(NMK)$ v.s. CG-OMP $O(NK\log K$).
Moreover, in Figs. \ref{fig:ComputationCost_1}(b) and (c), O-OMPR and O-SP have the same orders with M-OMPR and M-SP because no Cholesky factorization is used.

On the other hand, when finite precision is considered, we consider two cases of setting $\xi = 10^{-5}$ and $\xi = 10^{-10}$ to verify that finite precision is adequate under the condition of noisy interferences.
Taking exact precision as the baseline, the difference of SNR values for reconstruction between settings for exact precision and $\xi = 10^{-5}$ is about $\pm 0.1$ dB.
Similarly, the difference between exact precision and $\xi = 10^{-10}$ is about $\pm 0.01$ dB.
Tables \ref{Table: OMP precision}, \ref{Table: OMPR precision}, and \ref{Table: SP precision} further illustrate the comparisons of computation costs under different precisions.
The precision setting $\xi = 10^{-10}$ results in about four times faster than the exact precision but only sacrifices $\pm 0.1$ dB for the reconstruct quality, which is acceptable for many applications.
In fact, the performance occasionally is better because exact precision may lead to over-fitting.

\begin{figure}[h]
\begin{minipage}[b]{1\linewidth}
%  \centering
  \centering{\epsfig{figure=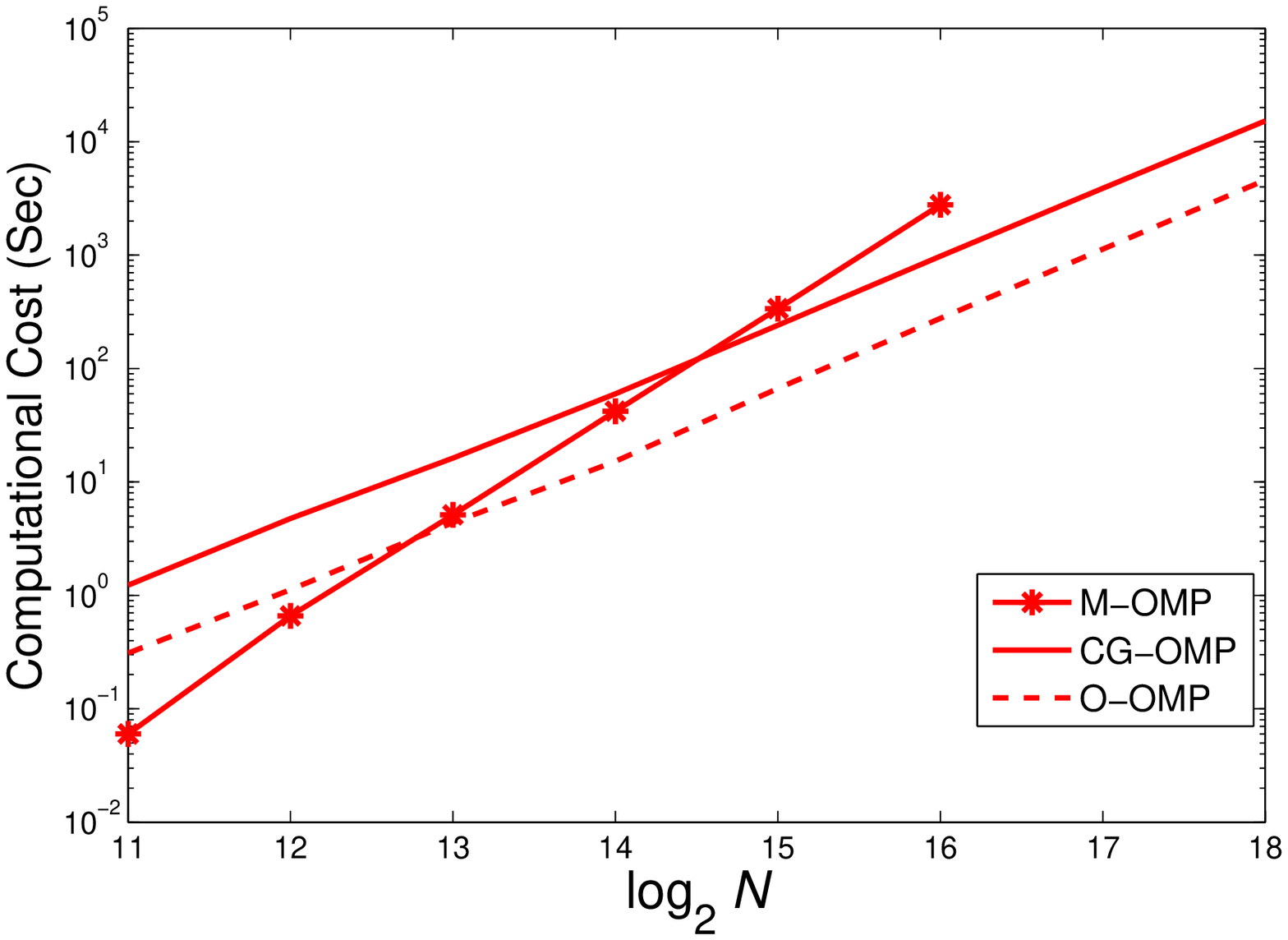,width=7.5 cm}}
  \centerline{(a) OMP}
\end{minipage}
\begin{minipage}[b]{1\linewidth}
%  \centering
  \centering{\epsfig{figure=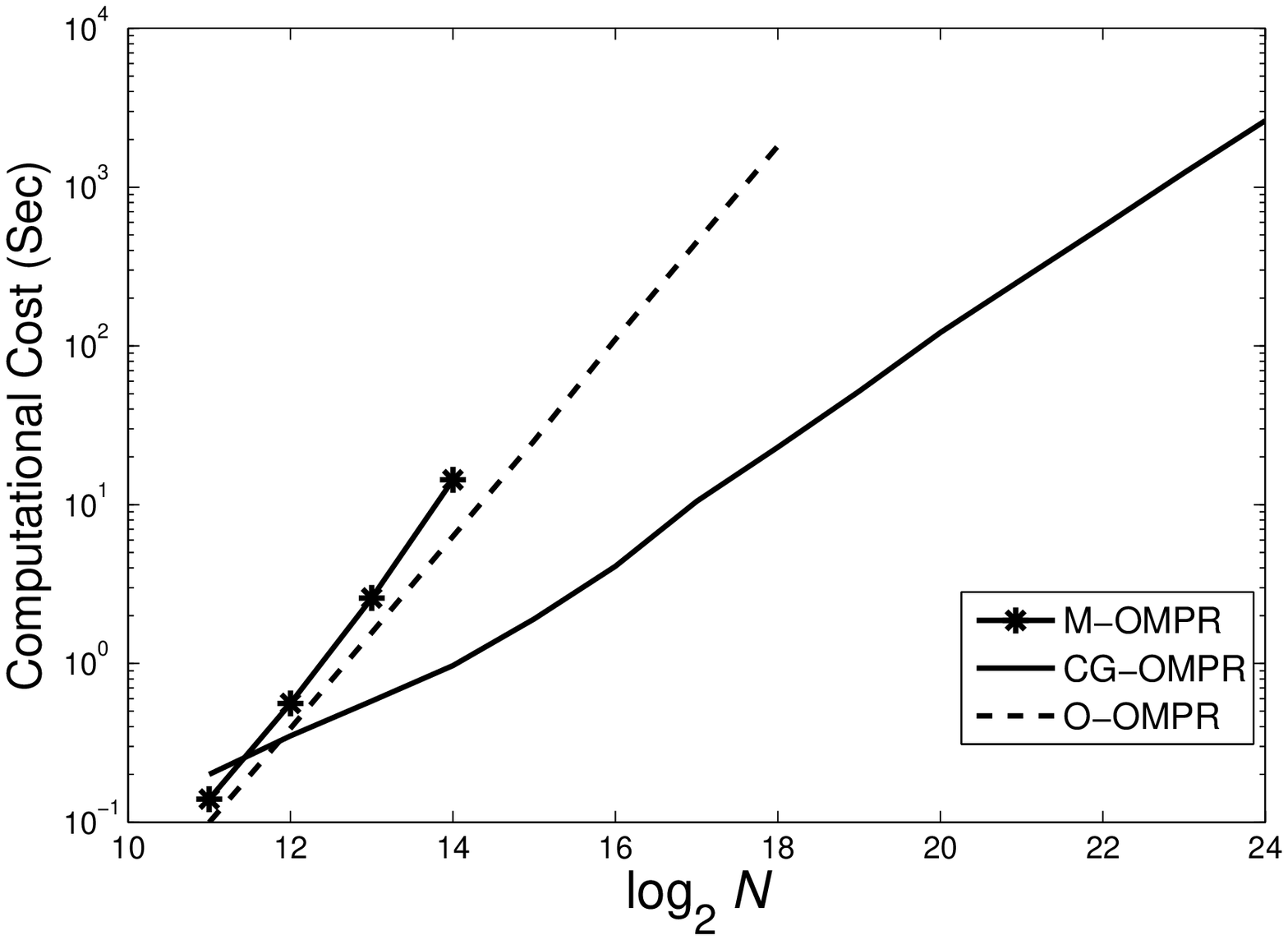,width=7.5 cm}}
  \centerline{(b) OMPR}
\end{minipage}
  \centerline{
\begin{minipage}[b]{1\linewidth}
%  \centering
  \centering{\epsfig{figure=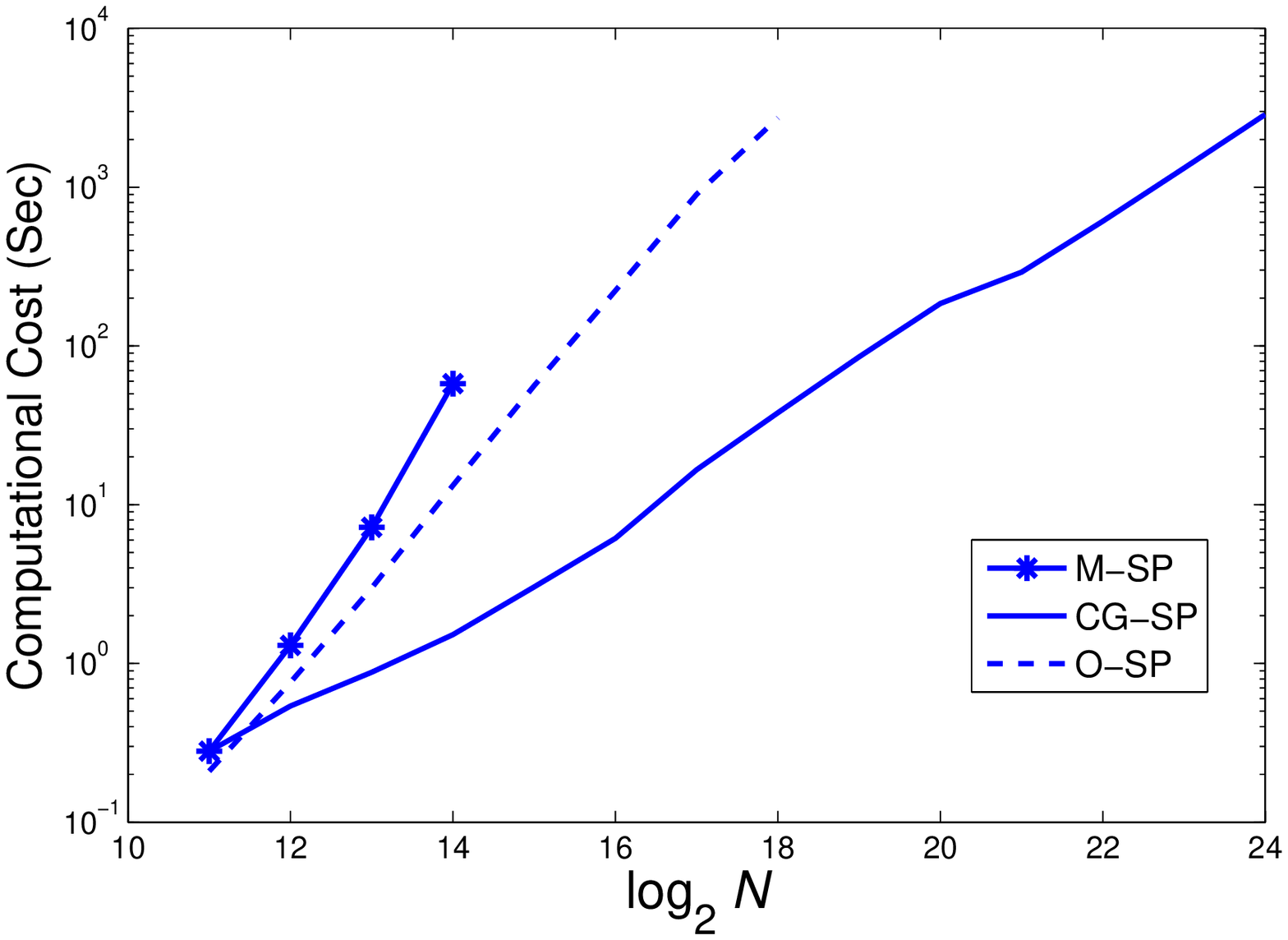,width=7.5 cm}}
  \centerline{(c) SP}
\end{minipage}
}
\hfill
\caption{Computation cost vs. signal length for OMP (a), OMPR (b), and SP (c), respectively, under $M=\frac{N}{4}$, and $K=\frac{M}{4} (p=0.0625)$.
The precision of CG was set to be exact.}
\label{fig:ComputationCost_1}
\end{figure}

\begin{table*}[t]
\scriptsize
\centering
\setlength{\abovecaptionskip}{0pt}
\setlength{\belowcaptionskip}{4pt}
\caption{The computational cost (in seconds) of CG-OMP under different precisions.}
\label{Table: OMP precision}
\doublerulesep=2pt
\begin{tabular}[tc]{|c||c|c|c|c|c|c|c|c|}
\hline
{\LARGE \textcolor{white}{o}}N& $2^{11}$& $2^{12}$& $2^{13}$ & $2^{14}$& $2^{15}$ & $2^{16}$ & $2^{17}$ & $2^{18}$  \\ \hline\hline
Exact precision & 1.2&  4.7 & 16.2 & 59.8& 241.0 & 980.3 & 3901.7 & 15301.4 \\ \hline
$\| r_{j} \|_{2} \leq 10^{-10}$ & 0.6& 2.3&  8.7& 35.5 & 131.3 & 564.7 & 2143.9 & 8341.3 \\ \hline
$\| r_{j} \|_{2} \leq 10^{-5}$ & 0.4& 1.8& 7.0 & 26.9 & 103.6 & 432.1 & 1662.8 & 6479.2\\ \hline
\end{tabular}
\end{table*}

\begin{table*}[t]
\scriptsize
\centering
\setlength{\abovecaptionskip}{0pt}
\setlength{\belowcaptionskip}{4pt}
\caption{The computational cost (in seconds) of CG-OMPR under different precisions.}
\label{Table: OMPR precision}
\doublerulesep=2pt
\begin{tabular}[tc]{|c||c|c|c|c|c|c|c|c|c|c|c|c|c|c|}
\hline
{\LARGE \textcolor{white}{o}}N& $2^{11}$& $2^{12}$& $2^{13}$ & $2^{14}$& $2^{15}$ & $2^{16}$ & $2^{17}$ & $2^{18}$ & $2^{19}$ & $2^{20}$ & $2^{21}$ & $2^{22}$ & $2^{23}$ & $2^{24}$  \\ \hline\hline
Exact precision & 0.2&  0.3 & 0.6 & 1.0 & 1.9 & 4.1 & 10.5 & 23.6 & 51.8 & 111.3 & 232.1 & 495.3 & 1109.1 & 2392.3\\ \hline
$\| r_{j} \|_{2} \leq 10^{-10}$ & 0.1& 0.2& 0.3 & 0.5 & 1.0 & 2.2 & 4.8 & 10.1 & 23.6 & 49.8 & 113.9 & 249.3 & 532.3 & 1130.4 \\ \hline
$\| r_{j} \|_{2} \leq 10^{-5}$ & 0.06& 0.09& 0.2 & 0.3 & 0.6& 1.3 & 2.9 & 6.0 & 13.1 & 27.7 & 60.3 & 131.3 & 293.4 & 643.6 \\ \hline
\end{tabular}
\end{table*}

\begin{table*}[t]
\scriptsize
\centering
\setlength{\abovecaptionskip}{0pt}
\setlength{\belowcaptionskip}{4pt}
\caption{The computational cost (in seconds) of CG-SP under different precisions.}
\label{Table: SP precision}
\doublerulesep=2pt
\begin{tabular}[tc]{|c||c|c|c|c|c|c|c|c|c|c|c|c|c|c|}
\hline
{\LARGE \textcolor{white}{o}}N& $2^{11}$& $2^{12}$& $2^{13}$ & $2^{14}$& $2^{15}$ & $2^{16}$ & $2^{17}$ & $2^{18}$ & $2^{19}$ & $2^{20}$ & $2^{21}$ & $2^{22}$ & $2^{23}$ & $2^{24}$  \\ \hline\hline
Exact precision & 0.3&  0.5 & 0.9 & 1.5 & 3.0 & 6.2 & 16.6 & 37.9 & 79.4 &  173.2 & 359.2 & 740.6 & 1581.3 & 3195.4\\ \hline
$\| r_{j} \|_{2} \leq 10^{-10}$ & 0.2 & 0.3& 0.5 & 0.8 & 1.6 & 3.6 & 7.8 & 17.3 & 36.1 & 78.3 & 177.6 & 382.9 & 784.5 & 1706.3\\ \hline
$\| r_{j} \|_{2} \leq 10^{-5}$ & 0.09& 0.2& 0.4 & 0.6 & 0.9 & 2.1 & 5.2 & 10.0 & 22.5 &49.1 &109.2 & 241.6 & 513.2 & 1096.8\\ \hline
\end{tabular}
\end{table*}

\section{Conclusions}\label{sec:conl}
The bottleneck of greedy algorithms in the context of compressive sensing is to solve the least square problem, in particular, when facing large-scale data.
In this paper, we address this challenging issue and propose a fast but cost-effective least square solver.
Our solution has been theoretically proved and can be readily incorporated with the existing greedy algorithms to improve their performance by significantl reducing computation complexit and memory cost.
Case studies on combining our method and OMP, SP, and OMPR have been conducted and shown promising results.

\section{Acknowledgment}
This work was supported by Ministry of Science and Technology, Taiwan, ROC, under grants MOST 104-2221-E-001-019-MY3 and NSC 104-2221-E-001-030-MY3.

\bibliographystyle{IEEEbib}	% (uses file "IEEEbib.bst")
\bibliography{refs}		% expectss file "refs.bib"

\end{document}